\newtheorem{assumption}{Assumption}
\newtheorem{model}{Model}
\newtheorem{theorem}{Theorem}
\newtheorem{lemma}[theorem]{Lemma}
\newcommand{\BP}{{\mathbb{P}}}
\newcommand{\BR}{{\mathbb{R}}}
\newcommand{\CN}{{\cal N}}
\newcommand{\CO}{{\cal O}}
\DeclareMathOperator{\Cov}{Cov}
\DeclareMathOperator{\LR}{LR}
\DeclareMathSymbol{\shortminus}{\mathbin}{AMSa}{"39}
\DeclareMathOperator{\diag}{diag}
\DeclareMathOperator*{\plim}{plim}
\DeclareMathOperator{\vecop}{vec}
\DeclarePairedDelimiterX{\infdivx}[2]{(}{)}{%
  #1\;\delimsize\|\;#2%
}
\newcommand{\Id}{\mathrm{Id}}
\newcommand{\tod}{\overset{d}{\to}}
\newcommand{\toP}{\overset{\BP}{\to}}
\let\phi\varphi
\let\theta\vartheta
\let\epsilon\varepsilon
\let\leq\leqslant
\let\geq\geqslant
\newcommand\numberthis{\addtocounter{equation}{1}\tag{\theequation}}
\title{
    The exact distribution of the conditional likelihood-ratio test in instrumental variables regression
}
\author{
Malte Londschien
\\
\vspace{0cm}\\
{\small Seminar for Statistics, ETH Z\"urich, Switzerland}\\
{\small AI Center, ETH Z\"urich, Switzerland}\\
}
\date{September 2025}
\begin{document}
\maketitle

\begin{abstract}
    \noindent We derive the exact asymptotic distribution of the conditional likelihood-ratio test in instrumental variables regression under weak instrument asymptotics and for multiple endogenous variables.
    The distribution is conditional on all eigenvalues of the concentration matrix, rather than only the smallest eigenvalue as in an existing asymptotic upper bound.
    This exact characterization leads to a substantially more powerful test if there are differently identified endogenous variables.
    We provide computational methods implementing the test and demonstrate the power gains through numerical analysis.
\end{abstract}

\section{Introduction}
Instrumental variables regression allows for the estimation of causal effects in the presence of unobserved confounding by exploiting variation in treatment variables induced by so-called instruments, variables that affect the outcome only through the treatment.
In practice, to make evidence-based policy decisions, reliable uncertainty quantification is essential.
Standard methods to construct $p$-values and confidence sets rely on the asymptotic normality of estimators such as the two-stage least squares or limited information maximum likelihood estimators.

\citet{staiger1997instrumental} show that when the instruments are weak, as is common in empirical economics,
these tests have incorrect size and reject the null hypothesis too often. %
To study this phenomenon, \citet{staiger1997instrumental} propose weak-instrument-asymptotics, a theoretical framework where instrument strength decreases as the number of samples increases, and the first-stage F-statistic is of constant order.
Several weak-instrument-robust tests exist that have the correct size under weak-instrument-asymptotics.
These include the Anderson-Rubin test \citep{anderson1951estimating}, the Lagrange multiplier test \citep{kleibergen2002pivotal}, and the conditional likelihood-ratio test \citep{moreira2003conditional}.

For a single endogenous variable, \citet{moreira2003conditional} derives the asymptotic distribution of the likelihood-ratio test statistic, conditional on the concentration parameter.
The resulting conditional test has correct size even if instruments are weak.
Given multiple endogenous variables, \citet{kleibergen2007generalizing} provides an asymptotic upper bound of the test's distribution, conditional on the smallest eigenvalue of the concentration matrix.
If all eigenvalues of the matrix are equal, this bound is sharp.

We compute the exact asymptotic distribution of the conditional likelihood-ratio test for multiple endogenous variables under weak-instrument-asymptotics.
This distribution is conditional on all eigenvalues of the concentration matrix rather than just the smallest.
This exact characterization substantially improves power when instruments vary in strength across endogenous variables or the endogenous variables are correlated, a common scenario that leads to differing eigenvalues of the concentration matrix.

We propose computation methods for the test's critical values and analyse its power in numerical analyses.
The test is implemented in the Python package \texttt{ivmodels} \citep{londschien2024weak,londschien2025statistician}.

\section{Main result}

We consider a standard instrumental variables regression model with weak instruments.
\begin{model}
    \label{model:0}
    Let $y_i = X_i^T \beta_0 + \varepsilon_i \in \BR$ with $X_i = Z_i^T \Pi + V_{X, i} \in \BR^m$ for random vectors $Z_i \in \BR^k, V_{X, i}\in \BR^m$, and $\varepsilon_i \in \BR$ for $i=1\ldots, n$ and parameters $\Pi \in \BR^{k \times m}$, and $\beta_0 \in \BR^m$.
    The $Z_i$ are \emph{instruments}, the $X_i$ are \emph{endogenous covariates}, and the $y_i$ are \emph{outcomes}.
    We consider \emph{weak instrument asymptotics} \citep{staiger1997instrumental}, where $\sqrt{n} \Pi = \Pi_0$ is fixed and of full column rank $m$ and thus $\Pi = \CO(\frac{1}{\sqrt{n}})$.
\end{model}%
\noindent Assume that a central limit theorem applies to the sums $Z^T \varepsilon$ and $Z^T V_X$.
\begin{assumption}
    \label{ass:0}
    Let
    $$
    \Psi := \begin{pmatrix} \Psi_{\varepsilon} & \Psi_{V_X} \end{pmatrix} := (Z^T Z)^{-1/2} Z^T \begin{pmatrix} \varepsilon & V_X \end{pmatrix} \in \BR^{k \times (1 + m)}.
    $$
    Assume there exist $\Omega \in \BR^{(1+m) \times (1+m)}$ and $Q \in \BR^{k \times k}$ positive definite such that, as $n \to \infty$,
    \begin{align*}
        &\mathrm{(a)} \ \ \frac{1}{n} \begin{pmatrix}\varepsilon & V_X \end{pmatrix}^T \begin{pmatrix}\varepsilon & V_X \end{pmatrix} \toP \Omega = \begin{pmatrix}
    \sigma^2_\varepsilon & \Omega_{\varepsilon, V_X} \\
    \Omega_{V_X, \varepsilon} & \Omega_{V_X} \\
\end{pmatrix}, \\
        &\mathrm{(b)} \ \ \vecop(\Psi) \tod \CN(0, \Omega \otimes \Id_k), \text{ and }\\
        &\mathrm{(c)} \ \ \frac{1}{n} Z^T Z \toP Q,
    \end{align*}
    where $\Cov(\vecop(\Psi)) = \Omega \otimes \Id_k$ means $\Cov(\Psi_{i, j}, \Psi_{i', j'}) = 1_{i = i'} \cdot \Omega_{j, j'}$.
\end{assumption}
\noindent \Cref{ass:0} is similar to the assumptions of \citeauthor{moreira2003conditional}'s \citeyearpar{moreira2003conditional} theorem 2 and is a special case of \citeauthor{kleibergen2007generalizing}'s \citeyearpar{kleibergen2007generalizing} assumption 1.
\citet{londschien2025statistician} show in their Lemma 1 that if the $(Z_i, \varepsilon_i, V_{X, i})$ are i.i.d.\ with finite second moments and conditional homoscedasticity, then \cref{ass:0} holds.

Denote with $P_Z = Z (Z^T Z)^{-1} Z^T$ the projection matrix onto the space spanned by $Z$ and $M_Z = \Id_n - P_Z$ the projection onto the orthogonal complement.

\begin{theoremEnd}[one big link translated={\hspace{-0.71cm} See proof on page}]{theorem}%
    \label{theorem:1}
    Assume \cref{model:0} and \cref{ass:0} holds.
    Let
    $$
    \LR(\beta) := (n - k) \frac{ (y - X \beta)^T P_Z (y - X \beta) }{ (y - X \beta)^T M_Z (y - X \beta) } - (n - k) \, \min_b  \frac{ (y - X b)^T P_Z (y - X b) }{ (y - X b)^T M_Z (y - X b) }
    $$
    be the likelihood-ratio test for the causal parameter $\beta$ in instrumental variables regression.
    Let $\tilde X(\beta) := X - (y - X \beta) \frac{ (y - X \beta)^T M_Z X }{ (y - X \beta)^T M_Z (y - X \beta) }$ and let $\lambda_1(\beta), \ldots, \lambda_m(\beta)$ be the eigenvalues of the matrix $(n - k) \, \left[ \tilde X(\beta)^T M_Z \tilde X(\beta) \right]^{-1} \tilde X(\beta)^T P_Z \tilde X(\beta)$.
    Let $q_0 \sim \chi^2(k - m)$ and $q_1, \ldots, q_m \sim \chi^2(1)$ be independent of each other.
    Denote with $\mu_\mathrm{min}(\lambda_1, \ldots, \lambda_m, q_0 \ldots, q_m)$ the smallest root of the polynomial
    $$
    p_{\lambda_1, \ldots, \lambda_m, q_0 \ldots q_m}(\mu) := \left(\mu - \sum_{i = 0}^m q_i \right) \cdot \prod_{i = 1}^m (\mu - \lambda_i) - \sum_{i = 1}^m \lambda_i q_i \prod_{j \geq 1, j\neq i} (\mu - \lambda_j).
    $$
    This satisfies $0 \leq \mu_\mathrm{min}(\lambda_1, \ldots, \lambda_m, q_0 \ldots, q_m)\leq \min(\lambda_1, q_0)$ and, conditionally on $\lambda_1(\beta_0), \ldots, \lambda_m(\beta_0)$,
    $$
    \LR(\beta_0) \overset{d}{\to} \sum_{i = 0}^m q_i - \mu_\mathrm{min}(\lambda_1(\beta_0), \ldots, \lambda_m(\beta_0), q_0, \ldots, q_m) \text{ as } n \to \infty.
    $$
\end{theoremEnd}%
\begin{proofEnd}%
    By Corollary 9 and Proposition 10 of \citet{londschien2025statistician}, we have that
    \begin{align*}
    \lambda &:=  (n - k) \, \min_b \frac{ (y - X b)^T P_Z (y - X b) }{ (y - X b)^T M_Z (y - X b) } \\
    &=  \lambda_\mathrm{min} \left( (n - k) \left[ \begin{pmatrix} y & X \end{pmatrix}^T M_Z \begin{pmatrix} y & X \end{pmatrix} \right]^{-1} \begin{pmatrix} y & X \end{pmatrix}^T P_Z \begin{pmatrix} y & X \end{pmatrix} \right)
    \end{align*}
    Write $\tilde X := \tilde X (\beta_0)$ and $\lambda_1, \ldots, \lambda_m = \lambda_1(\beta_0), \ldots, \lambda_m(\beta_0)$.
    Calculate
    \begin{equation} \label{eq:clr1}
        \begin{pmatrix} y & X \end{pmatrix} \begin{pmatrix} 1 & 0 \\ - \beta_0 & \Id_m \end{pmatrix} 
        \begin{pmatrix}
             1 & -\frac{\varepsilon^T M_Z X}{\varepsilon^T M_Z \varepsilon} \\ 0 & \Id_m
         \end{pmatrix}
        =
        \begin{pmatrix} \varepsilon & X \end{pmatrix} 
        \begin{pmatrix}
            1 & -\frac{\varepsilon^T M_Z X}{\varepsilon^T M_Z \varepsilon} \\ 0 & \Id_m
        \end{pmatrix} =
        \begin{pmatrix} \varepsilon & \tilde X \end{pmatrix}
    \end{equation}
    Note that $\varepsilon^T M_Z \tilde X = 0$ and thus
    \begin{equation}\label{eq:clr2}
        \widehat{\tilde \Omega} := \frac{1}{n-k}
        \begin{pmatrix} \varepsilon & \tilde X \end{pmatrix}^T M_Z \begin{pmatrix} \varepsilon & \tilde X \end{pmatrix} = 
        \frac{1}{n-k}
        \begin{pmatrix}
            \varepsilon^T M_Z \varepsilon & 0 \\
            0 & \tilde X^T M_Z \tilde X
        \end{pmatrix}        = : \begin{pmatrix}
        \hat\sigma^2 & 0 \\ 0 & \widehat{\tilde \Omega}_{V_X}
        \end{pmatrix}
    \end{equation}
    Calculate
    \begin{align*}
    \lambda %
    &= \min \{\mu \in \BR \mid \det( 
        \mu \cdot \Id_{m+1} - (n - k) \left[\begin{pmatrix} y & X \end{pmatrix}^T M_Z \begin{pmatrix} y & X \end{pmatrix}\right]^{-1} \begin{pmatrix} y & X \end{pmatrix}^T P_Z \begin{pmatrix} y & X \end{pmatrix}
    ) = 0 \}\\
    &= 
    \min \{ \mu \in \BR \mid \det( 
        \frac{\mu}{n-k} \cdot \begin{pmatrix} y & X \end{pmatrix}^T M_Z \begin{pmatrix} y & X \end{pmatrix} -  \begin{pmatrix} y & X \end{pmatrix}^T P_Z \begin{pmatrix} y & X \end{pmatrix}
    ) = 0\}\\
    &\overset{\text{(\ref{eq:clr1}, \ref{eq:clr2})}}{=} 
    \min \{ \mu \in \BR \mid \det( 
        \mu \cdot \Id_{m+1}
        - \widehat{\tilde \Omega}^{-1/2, T} \begin{pmatrix} \varepsilon & \tilde X \end{pmatrix}^T P_Z \begin{pmatrix} \varepsilon & \tilde X \end{pmatrix}  \widehat{\tilde \Omega}^{-1/2}
    ) = 0\}.
    \end{align*}

    Let $U D V =  (Z^T Z)^{-1/2} Z^T \tilde X \, {\widehat{\tilde \Omega}}_{V_X}^{-1/2}$ be a singular value decomposition with $D^2 = \diag( \lambda_1, \ldots, \lambda_m )$ containing the eigenvalues of $ (n - k) \cdot (\tilde X^T M_Z \tilde X)^{-1} \tilde X^T P_Z \tilde X$.
    Let $U_i$ be the $i$-th column of $U$ for $i = 1, \ldots, m$.
    Then $U_i^T U_j = 0$ for $i \neq j$ and $1$ otherwise.
    Calculate
    \begin{align*}
       \Sigma:=\widehat{ \tilde \Omega }^{-1/2, T} \begin{pmatrix} \varepsilon & \tilde X \end{pmatrix}^T &P_Z \begin{pmatrix} \varepsilon & \tilde X \end{pmatrix} \widehat{ \tilde \Omega }^{-1/2} = 
        \begin{pmatrix} 1 & 0 \\ 0 & V^T \end{pmatrix}
        \begin{pmatrix} \varepsilon^T P_Z \varepsilon / \hat\sigma^2 & \Psi_\varepsilon^T U D / \hat\sigma \\ D U^T \Psi_\varepsilon / \hat\sigma & D^2 \end{pmatrix}
        \begin{pmatrix} 1 & 0 \\ 0 & V \end{pmatrix}
    \end{align*}
    such that
    \begin{align*}
        \det \left( \mu \cdot \Id_{m + 1} - \Sigma \right) &= \det \begin{pmatrix} \mu - \varepsilon^T P_Z \varepsilon / \hat\sigma^2 & \Psi_\varepsilon^T U_1 \sqrt{\lambda_1} /  \hat\sigma & \cdots & \Psi_\varepsilon^T U_m \sqrt{\lambda_m} /  \hat\sigma \\
            \sqrt{\lambda_1} U_1^T \Psi_\varepsilon /  \hat\sigma & \mu - \lambda_1 & \cdots & 0 \\
        \vdots & \vdots& \ddots & \vdots \\
        \sqrt{\lambda_m} U_m^T \Psi_\varepsilon /  \hat\sigma & 0 & \cdots & \mu - \lambda_m
    \end{pmatrix}.
    \end{align*}
    We apply \cref{lem:1} with $d_0 = \mu - \varepsilon^T P_Z \varepsilon / \hat\sigma^2$, $d_i = \mu - \lambda_i$, and $a_i = \Psi_\varepsilon^T U_i \sqrt{\lambda_i} / \hat \sigma$ for $i = 1, \ldots, m$.
    Then
    $$
    \det \left( \mu \cdot \Id_{m + 1} - \Sigma \right) = (\mu - \varepsilon^T P_Z \varepsilon / \hat\sigma^2) \cdot \prod_{i = 1}^m (\mu - \lambda_i) - \sum_{i = 1}^m (\Psi_\varepsilon^T U_i)^2 \lambda_i / \hat\sigma^2 \prod_{j \geq 1, j \neq i} (\mu - \lambda_i) .
    $$ 
    Define $q_i := (\Psi_\varepsilon^T U_i)^2 / \hat\sigma^2 = \Psi_\varepsilon^T P_{U_i} \Psi_\varepsilon / \hat\sigma^2$ and $q_0 := \Psi_\varepsilon^T (\Id_k - P_U) \Psi_\varepsilon / \hat\sigma^2$.
    Then, $\det\left( \mu \cdot \Id_{m + 1} - \Sigma \right) = p(\mu)$ and $\lambda = \mu_\mathrm{min}(\lambda_1, \ldots, \lambda_m, q_0, \ldots, q_m)$.

    It remains to show that the $q_i \to_d \chi^2(1)$ for $i = 1, \ldots, m$ and $q_0 \to_d \chi^2(k - m)$, asymptotically independent of each other and of $(n - k) [\tilde X^T M_Z \tilde X ]^{-1/2} \tilde X^T P_Z \tilde X [\tilde X^T M_Z \tilde X ]^{-1/2} $.

    Write
    $$
    \Omega = \begin{pmatrix} \sigma^2_\varepsilon & \Omega_{\varepsilon, V_X} \\ \Omega_{V_X, \varepsilon} & \Omega_{V_X} \end{pmatrix} \text{ and } \tilde \Omega :=         \begin{pmatrix}
            1 & -\Omega_{\varepsilon, V_X} / \sigma^2_\varepsilon \\ 0 & \Id_m
        \end{pmatrix}^T \Omega         \begin{pmatrix}
            1 & -\Omega_{\varepsilon, V_X} / \sigma^2_\varepsilon \\ 0 & \Id_m
        \end{pmatrix} =: \begin{pmatrix}
            \sigma^2_\varepsilon & 0 \\
            0 & \tilde \Omega_{V_X}
        \end{pmatrix}.
    $$
    By \cref{ass:0} (a), we have $\widehat{\tilde \Omega} \to_\BP \tilde \Omega$.
    Define $\Psi_{\tilde X} := (Z^T Z)^{-1/2} Z^T \tilde X \to_\BP (Z^T Z)^{1/2} \Pi + \Psi_{V_X} - \Psi_{\varepsilon} \Omega_{\varepsilon, V_X} / \sigma^2_\varepsilon$ as $\frac{\varepsilon^T M_Z X}{\varepsilon^T M_Z \varepsilon} \to_\BP \Omega_{\varepsilon, V_X} / \sigma^2_\varepsilon$ by \cref{ass:0} (a).
    Then, by \cref{ass:0} (b, c) and as $\Pi = \frac{1}{\sqrt{n}} \Pi_0$:
    $$\vecop(\Psi_\varepsilon, \Psi_{\tilde X} ) \tod \CN \left( (0, Q^{1/2} \Pi_0), \tilde \Omega \otimes \Id_k \right).
    $$
    As the off-diagonal terms of $\tilde \Omega$ are zero, this implies that $\Psi_\varepsilon$ and $\Psi_{\tilde X}$ are asymptotically jointly Gaussian and asymptotically independent.
    Then, also $\Psi_\varepsilon$ and 
    $$\plim \ (n - k) [\tilde X^T M_Z \tilde X ]^{-1/2} \tilde X^T P_Z \tilde X [\tilde X^T M_Z \tilde X ]^{-1/2} = \tilde \Omega_{V_X}^{-1/2} \Psi_{\tilde X}^T \Psi_{\tilde X} \tilde \Omega_{V_X}^{-1/2}$$
    are asymptotically independent.
    
    We condition on $ (n - k) [\tilde X^T M_Z \tilde X ]^{-1/2} \tilde X^T P_Z \tilde X [\tilde X^T M_Z \tilde X ]^{-1/2}$ (with eigenvalues $\lambda_1, \ldots, \lambda_m$).
    We apply Cochran's theorem with $\Psi_\varepsilon / \sigma_\varepsilon^2 \sim \CN(0, \Id_k)$ and $A_i := P_{U_i} = U_i U_i^T$ for $i=1, \ldots m$ and $A_0 := \Id - U U^T = M_U = \Id_k - \sum_{i=1}^m A_i$ of ranks 1 and $k-m$.
    This yields that the $q_i = (\Psi_\varepsilon^T U_i)^2 / \hat\sigma^2 = \Psi_\varepsilon^T U_i U_i^T \Psi_\varepsilon / \hat\sigma^2 \to_\BP \Psi_\varepsilon^T A_i \Psi_\varepsilon / \sigma^2_\varepsilon \to_d \chi^2(1)$ independently and $q_0 = \Psi_\varepsilon^T A_0 \Psi_\varepsilon / \hat\sigma^2 \to_d \chi^2(k - m)$.
\end{proofEnd}
This directly implies \citeauthor{moreira2003conditional}'s \citeyearpar{moreira2003conditional} result for $m=1$ and \citeauthor{kleibergen2007generalizing}'s \citeyearpar{kleibergen2007generalizing} upper bound for $m \geq 1$.
\begin{theoremEnd}[one big link translated={\hspace{-0.71cm} See proof on page}]{corollary}[\citeauthor{moreira2003conditional}, \citeyear{moreira2003conditional}]
    \label{cor:1}
    If $m=1$, then, conditionally on 
    $$\lambda_1 := (n - k) \left[ \tilde X(\beta_0)^T M_Z \tilde X(\beta_0) \right]^{-1} \tilde X(\beta_0)^T P_Z \tilde X(\beta_0),$$
    we have
    $$
    \LR(\beta_0) \tod \Gamma(k - 1, 1, \lambda_1),
    $$
    where 
    $$\Gamma(k - 1, 1, \lambda_1) \overset{d}{:=} \frac{1}{2} (q_0 + q_1 - \lambda_1 + \sqrt{(q_0 + q_1 + \lambda_1)^2 - 4 q_0 \lambda_1})$$
    for $q_0 \sim \chi^2(k - 1)$ and $q_1 \sim \chi^2(1)$ independent.
\end{theoremEnd}
\begin{proofEnd}
    If $m=1$ then
    $$p(\mu) = (\mu - q_0 - q_1)(\mu - \lambda_1) - \lambda_1 q_1 = \mu^2 - (q_0 + q_1 + \lambda_1) \mu + q_0 \lambda_1.$$
    This has roots $\mu_\pm = \frac{1}{2} (q_0 + q_1 + \lambda_1 \pm \sqrt{(q_0 + q_1 + \lambda_1)^2 - 4 q_0 \lambda_1})$.
    Thus, 
    $$
    \LR(\beta_0) \tod q_0 + q_1 - \mu_- = \frac{1}{2} \left(q_0 + q_1 -  \lambda_1 + \sqrt{(q_0 + q_1 + \lambda_1)^2 - 4 q_0 \lambda_1}\right) \sim \Gamma(k - 1, 1, \lambda_1).
    $$
\end{proofEnd}

\begin{theoremEnd}[one big link translated={\hspace{-0.71cm} See proof on page}]{corollary}[\citeauthor{kleibergen2007generalizing}, \citeyear{kleibergen2007generalizing}]
    \label{cor:2}
    Conditionally on
    $$\lambda_1 := (n - k) \lambda_\mathrm{min}\left(\left[ \tilde X(\beta_0)^T M_Z \tilde X(\beta_0) \right]^{-1} \tilde X(\beta_0)^T P_Z \tilde X(\beta_0) \right),$$
    the random variable $\LR(\beta_0)$ is asymptotically stochastically bounded from above by
    $$\Gamma(k - m, m, \lambda_1) \overset{d}{:=} \frac{1}{2} \left(q_0 + q_1 - \lambda_1 + \sqrt{ \left(q_0 + q_1 + \lambda_1 \right)^2 - 4 q_0 \lambda_1} \right),$$
    where $q_0 \sim \chi^2(k - m)$ and $q_1 \sim \chi^2(m)$ are independent.
\end{theoremEnd}
\begin{proofEnd}
    Let $p_1(\mu)$ be equal to $p(\mu)$ but with all $\lambda_i$ replaced with $\lambda_1$:
    \begin{align*}
    p_1(\mu) &:=  (\mu - \lambda_1)^{m-1} \left( (\mu - \sum_{i=0}^m q_i) (\mu - \lambda_1) - \lambda_1 \sum_{i = 1}^m q_i.
    \right) %
    \end{align*}
    This has roots $\lambda_1$ and $\mu_\pm = \frac{1}{2} \left(\lambda_1 + \sum_{i=0}^m q_i \pm \sqrt{(\lambda_1 + \sum_{i=0}^m q_i)^2 - 4 \lambda_1 q_0} \right)$.
    The smallest root is $\mu_- < \lambda_1$.

    Define
    \begin{align*}
    g(\mu) &:= \frac{p(\mu)}{\prod_{i=1}^m (\mu - \lambda_i)} = (\mu - \sum_{i=0}^m q_i) - \sum_{i=1}^m \frac{\lambda_i q_i}{\mu - \lambda_i} \, \text{ and} \\
    g_1(\mu) &:= \frac{p_1(\mu)}{(\mu - \lambda_1)^{m}} = (\mu - \sum_{i=0}^m q_i) - \sum_{i=1}^m \frac{\lambda_1 q_i}{\mu - \lambda_1}.
    \end{align*}
    As $q_i > 0$ almost surely, for any $0 < \mu < \lambda_1 \leq \lambda_i$ we have $\frac{\lambda_i q_i}{\mu - \lambda_i} \geq \frac{\lambda_1 q_i}{\mu - \lambda_1}$ (multiply both sides by $(\mu - \lambda_1) (\mu - \lambda_i) > 0$ to verify) with equality if and only if $\lambda_i = \lambda_1$.
    Thus $g_1(\mu) \geq g(\mu)$, with equality if and only if $\lambda_i = \lambda_1$ for all $i$.
    Thus, $ g_1(\mu_\mathrm{min}) \geq g(\mu_\mathrm{min}) = 0$.
    Calculate $g_1(0) = - q_0 < 0$.
    As $g_1$ is continuous on $[0, \mu_\mathrm{min}] \subset [0, \lambda_1)$, the continuous mapping theorem implies that $g_1$ has a root in $[0, \mu_\mathrm{min}]$.
    Thus $\mu_- \leq \mu_\mathrm{min}$, with equality if and only if $\lambda_i = \lambda_1$ for all $i$.

    Thus,
    $$
    \LR(\beta_0) \tod \sum_{i=0}^m q_i - \mu_\mathrm{min} \geq \sum_{i=0}^m q_i - \mu_-
    $$
    Finally, replace $q_1 \leftarrow \sum_{i=1}^m q_i \sim \chi^2(m)$ to obtain $\LR(\beta_0) \leq \Gamma(k - m, m, \lambda_1)$.
\end{proofEnd}

\section{Computation}
To compute $p$-values based on \cref{theorem:1}, we need to approximate the cumulative distribution function of $\LR(\beta_0) \overset{d}{=} \sum_{i=0}^m q_i - \mu_\mathrm{min}$.
Using results from \citet{hillier2009conditional}, \citet{londschien2025statistician} propose to approximate the cumulative distribution function of $\Gamma(k - m, m, \lambda_1) \geq \sum_{i=0}^m q_i - \mu_\mathrm{min}$ (\cref{cor:2}) by transforming $\BP[ \Gamma(k - m, m, \lambda_1) \leq z]$ into a well-behaved one-dimensional integral.
This uses the closed-form solution for $\mu_-$, the minimal root of $p(\mu)$ for $m=1$.
For $m=2,3$, closed-form solutions for the roots of the cubic or quartic polynomial $p(\mu)$ exist, but they are not instructive.
For $m>3$, no such closed-form solutions exist.

Still, the smallest root $\mu_\mathrm{min}$ of $p(\mu)$ can be computed efficiently.
By the eigenvalue interlacing theorem, the sorted roots $\mu_i$ of $p(\mu)$ satisfy $\mu_\mathrm{min} = \mu_i \leq \lambda_1 \leq \mu_2 \leq \ldots \leq \lambda_m \leq \mu_{m+1}$ and $\mu_\mathrm{min} \leq \lambda_1$ is the only root of $p(\mu)$ in $[0, \lambda_1)$.
Define
\begin{align*}
    g(\mu) &:= \frac{p(\mu)}{\prod_{i=1}^m (\mu - \lambda_i)} = (\mu - \sum_{i=0}^m q_i) - \sum_{i=1}^m \frac{\lambda_i q_i}{\mu - \lambda_i} \, \text{ with derivative} \\
    g'(\mu) &= 1 + \sum_{i=1}^m \frac{\lambda_i q_i}{(\mu - \lambda_i)^2} > 0 \, \text{ for } \mu < \lambda_1.
\end{align*}
This is continuous and strictly increasing on $[0, \lambda_1)$ with $g(0) = -q_0 < 0$ and $\lim_{\mu \nearrow \lambda_1} g(\mu) = +\infty$.
Like $p(\mu)$, this has exactly one root in $[0, \lambda_1)$, equal to $\mu_\mathrm{min}$.
Thus, we can compute $\mu_\mathrm{min}$ by bisection or Newton's method.
We use Newton's method with a starting value of $\mu_0 = \frac{1}{2}(\sum_{i=0}^m q_i + \lambda_1 + \sqrt{ (\sum_{i=0}^m q_i + \lambda_1)^2 - 4 q_0 \lambda_1} )$, the bound from \cref{cor:2}, in the \href{https://github.com/mlondschien/ivmodels}{\texttt{ivmodels}} software package for Python.

\section{Numerical analysis}
\Cref{theorem:1} provides the exact asymptotic distribution of the likelihood-ratio test conditional on the eigenvalues of the concentration matrix.
Unless all eigenvalues are equal, the distribution is stochastically strictly smaller than the bound of \citet{kleibergen2007generalizing} (\cref{cor:2}) and using critical values based on \cref{theorem:1} leads to a strictly more powerful test.

In their analysis, \citet[][page 190]{kleibergen2007generalizing} writes that the exact distribution of $\LR(\beta_0)$ is ``indistinguishable'' from the bound $\Gamma(k - m, m, \lambda_1)$.
We observe that this assessment does not hold if the eigenvalues of the concentration matrix differ substantially.
When endogenous variables are differently identified, a common scenario in practice, using the exact distribution leads to a substantial improvement in power.

All computations were done using the \href{https://github.com/mlondschien/ivmodels}{\texttt{ivmodels}} software package for Python \citep{londschien2024weak,londschien2025statistician}.
The code to reproduce figures is available at the GitHub repository \href{https://github.com/mlondschien/ivmodels-simulations}{\texttt{github.com/mlondschien/ivmodels-simulation}}.

\subsection*{The critical value function}
\Cref{fig:1} shows the critical value functions of $\LR(\beta_0)$ at nominal level $\alpha=0.05$ according to \cref{theorem:1} under different identifications.
For $m=2, 4$ and $k = \frac{3}{2}m, \frac{5}{2}m, 5m$, we independently draw $q_0 \sim \chi^2(k-m)$ and $q_i \sim \chi^2(1)$.
We set $\lambda_1 = \Delta \lambda_1 + q_0$ and $\lambda_2 = \ldots = \lambda_m = \Delta \lambda_2 + q_0$ to avoid draws with $\mu_\mathrm{min} > \lambda_1$ as $\mu_\mathrm{min} \leq q_0$.
We compare four settings: (i)~$\Delta \lambda_1 = \Delta \lambda_2 = 5$, (ii)~$\Delta \lambda_1 = 5, \Delta \lambda_2 = 50$, (iii)~$\Delta \lambda_1 = \Delta \lambda_2 = 10$, and (iv)~$\Delta \lambda_1 = 10, \Delta \lambda_2 = 100$.
We also show the critical value function of a $\chi^2(m)$ distribution, corresponding to $\lambda_i \to \infty$ for all $i$.

\begin{figure}[htbp]
\centering
\includegraphics[width=0.9\textwidth]{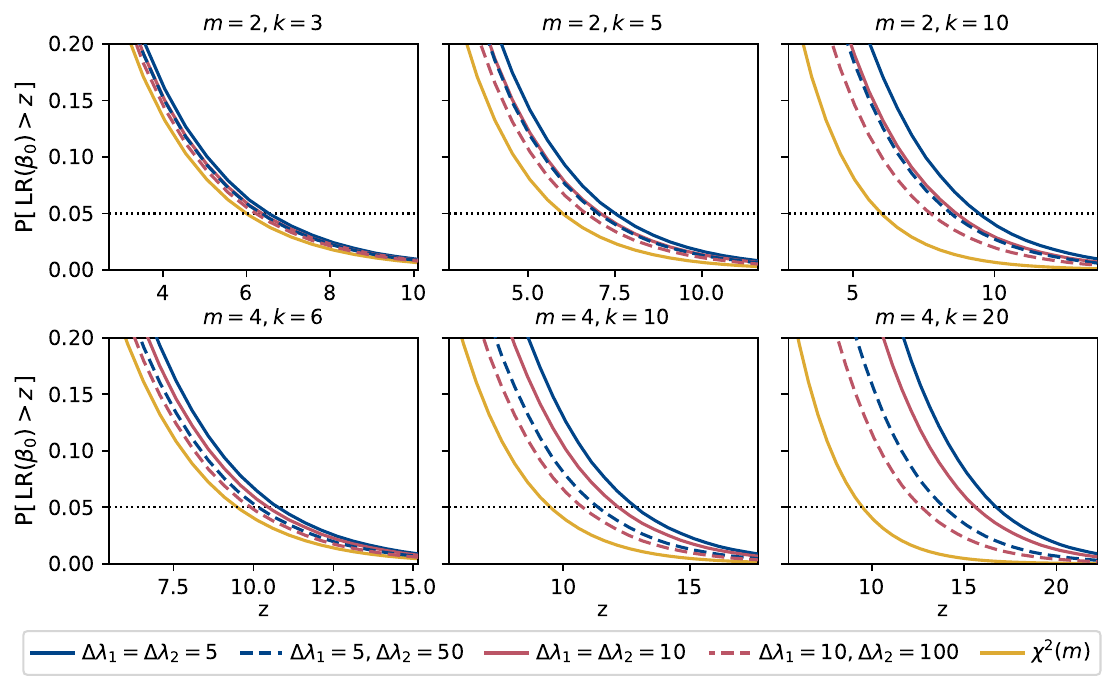}
\caption{
\label{fig:1}
Critical value functions for the conditional likelihood-ratio test conditional on $\lambda_1 = q_0 + \Delta \lambda_1$ and $\lambda_2 = \ldots = \lambda_m = q_0 + \Delta \lambda_2$.
This avoids draws with $\mu_\mathrm{min} > \lambda_1$ as $\mu_\mathrm{min} \leq q_0$.
}
\end{figure}

The critical value functions for $\Delta \lambda_1 = \Delta \lambda_2$ are exactly equal to those that would be obtained by \citeauthor{kleibergen2007generalizing}'s \citeyearpar{kleibergen2007generalizing} bound, independently of $\Delta \lambda_2$.
That is, the difference between the critical value functions for $\Delta \lambda_1 = \Delta \lambda_2$ (solid) and $10 \cdot \Delta \lambda_1 = \Delta \lambda_2$ (dashed) is exactly the increase in power achieved by using the exact distribution of \cref{theorem:1} instead of the bound of \cref{cor:2}.
For all $k, m$, the critical value function for (ii) $\Delta \lambda_1 = 5, \Delta \lambda_2 = 50$ is smaller than that for (iii) $\Delta \lambda_1 = \Delta \lambda_2 = 10$.

\subsection*{Size}
\citet{kleibergen2021efficient} shows that the asymptotic distribution of the subvector conditional likelihood-ratio test under the null depends only on $k, m$, and $\tilde \mu := n\Omega_{V \cdot \varepsilon}^{-1} \Pi^T Q \Pi$, where $\Omega_{V \cdot \varepsilon} := \Omega_V - \Omega_{V, \varepsilon} \Omega_{\varepsilon, V} / \sigma^2_\varepsilon$.
Due to rotational invariance, the asymptotic distribution of the full vector conditional likelihood-ratio then depends only on $k, m$, and the eigenvalues of $\tilde \mu$.

\Cref{fig:2} compared the empirical sizes at nominal level $\alpha = 0.05$ using \citeauthor{kleibergen2007generalizing}'s \citeyearpar{kleibergen2007generalizing} critical values (old, left) to those of \cref{theorem:1} (new, right) for $k, m = 10, 2$ (top) and $k, m = 20, 4$ (bottom).
We draw $n=1000$ samples from a Gaussian linear model with $\tilde \mu = \diag(\lambda_1, \lambda_2)$ ($m=2$, top) and $\tilde \mu = \diag(\lambda_1, \lambda_2, \lambda_2, \lambda_2)$ ($m=4$, bottom) for $\lambda_1, \lambda_2 = 1, \ldots, 100$ and show the proportion of rejections out of $50'000$ simulations for each grid point.

The empirical size of the conditional likelihood-ratio test using  \citeauthor{kleibergen2007generalizing}'s \citeyearpar{kleibergen2007generalizing} critical values varies with $\lambda_1, \lambda_2$ and drops substantially below the nominal level $\alpha = 0.05$ if $\lambda_1$ and $\lambda_2$ are of a different magnitude.
In contrast, up to noise, the empirical size of the conditional likelihood-ratio test using the critical values of \cref{theorem:1} is constant and equal to the nominal level $\alpha = 0.05$.

Note that $\tilde \mu = \diag(\lambda_1, \ldots, \lambda_4)$ does not imply that $\lambda_1, \ldots, \lambda_4$ are the eigenvalues of the empirical version of the concentration matrix $(n - k) [\tilde X(\beta_0)^T M_Z \tilde X(\beta_0) ]^{-1} \tilde X(\beta_0)^T P_Z \tilde X(\beta_0)$.
This explains why the rejection rates using \citeauthor{kleibergen2007generalizing}'s \citeyearpar{kleibergen2007generalizing} critical values (left) are not exactly equal to the nominal level $\alpha = 0.05$ on the diagonal $\lambda_1 = \lambda_2$.

\begin{figure}[htbp]
\centering
\includegraphics[width=\textwidth]{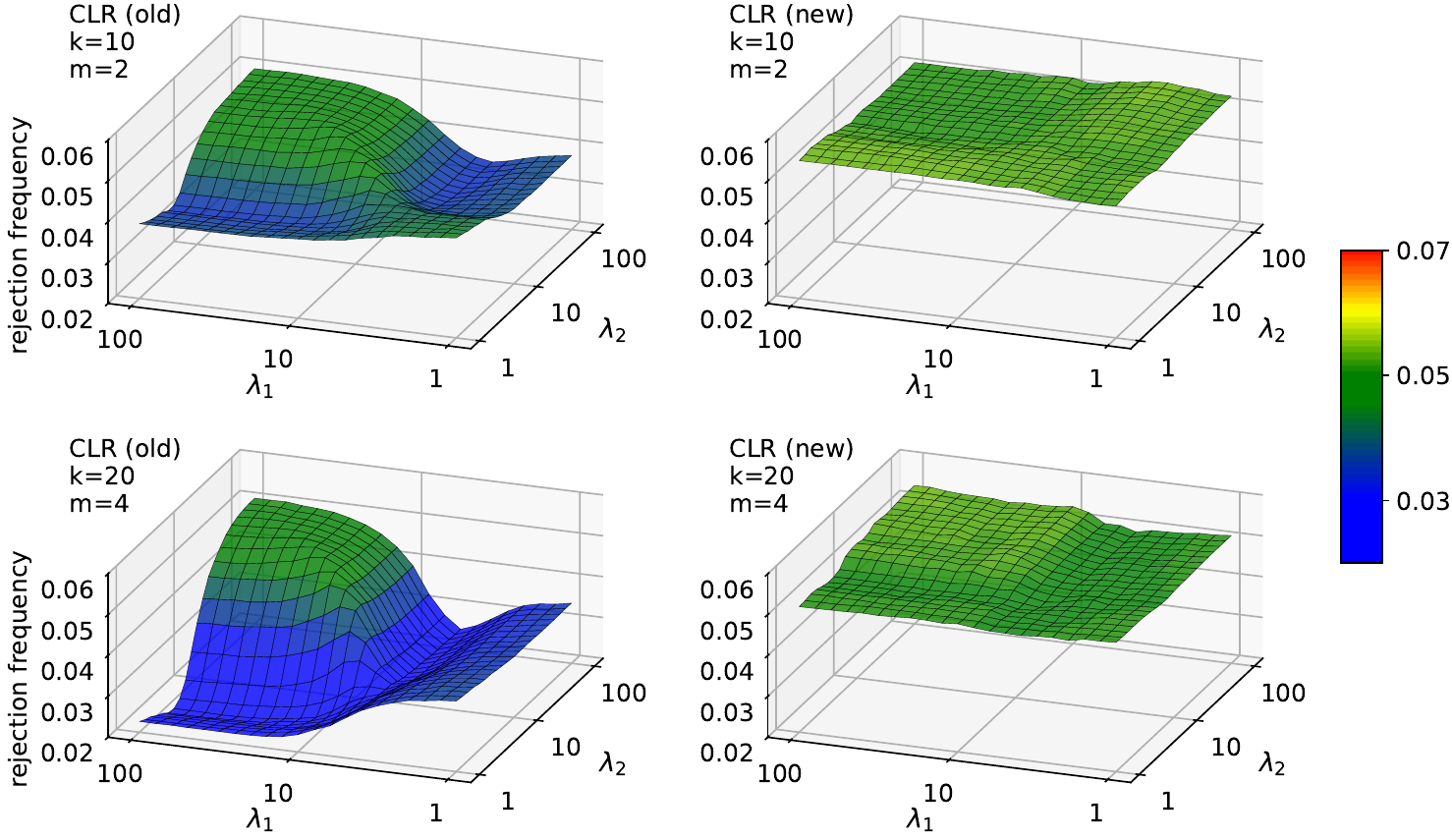}
\caption{
\label{fig:2}
Empirical sizes of the conditional likelihood-ratio test with \citeauthor{kleibergen2007generalizing}'s \citeyearpar{kleibergen2007generalizing} critical values (old, left) and the exact critical values from \cref{theorem:1} (new, right) at the nominal level $\alpha=0.05$.
We draw data from a Gaussian linear model with concentration matrix $n \Omega_{V \cdot \varepsilon}^{-1} \Pi^T Q \Pi = \mathrm{diag}(\lambda_1, \lambda_2)$ ($k=10$, $m=2$, top) and $n \Omega_{V \cdot \varepsilon}^{-1} \Pi^T Q \Pi = \mathrm{diag}(\lambda_1, \lambda_2, \lambda_2, \lambda_2)$ ($k=20, m=4$, bottom), varying $\lambda_1, \lambda_2 = 1, \ldots, 100$ over a logarithmic grid of $21 \times 21$ and show empirical rejection rates over $50'000$ draws for each grid point.
}
\end{figure}

\subsection*{Power}
Finally, we numerically analyse the power difference of the conditional likelihood-ratio test at nominal level $\alpha = 0.05$ using \citeauthor{kleibergen2007generalizing}'s \citeyearpar{kleibergen2007generalizing} critical values and those of \cref{theorem:1}.
For $i=1, \ldots, 1000$, we independently draw $Z_i \sim \mathcal{N}(0, \Id_k)$, $\Pi \in \BR^{k \times m}$ such that $n \Pi^T \Pi = \mathrm{diag}(\lambda_1, \lambda_2, \ldots, \lambda_2)$, and $V_{X_i} \sim \mathcal{N}(0, \Id_m)$ and $y_i = \varepsilon_i \sim \mathcal{N}(0, 1)$ (that is, $\beta_0 = 0$) jointly Gaussian with $\Cov(V_{X, i}, \varepsilon_i) = (-0.5, 0, \ldots, 0)$.
We fix $\lambda_1 = 5$ (left), $10$ (right) and vary $\lambda_2 = 1, \ldots, 100$ and $\beta = \beta_1 \cdot e_1$ for $\beta_1 = -1, \ldots, 1$.
In \cref{fig:3}, we show difference between the empirical rejection rates at nominal level $\alpha =0.05$ using \citeauthor{kleibergen2007generalizing}'s \citeyearpar{kleibergen2007generalizing} critical values and those of \cref{theorem:1}.
We observe that the critical values of \cref{theorem:1} result in a substantially more powerful test, with a difference in rejection rates at level $\alpha = 0.05$ of up to 6\% ($k=10, m=2$, top) and up to 14\% ($k=20, m=4$, bottom).
\begin{figure}[htbp]
\centering
\includegraphics[width=\textwidth]{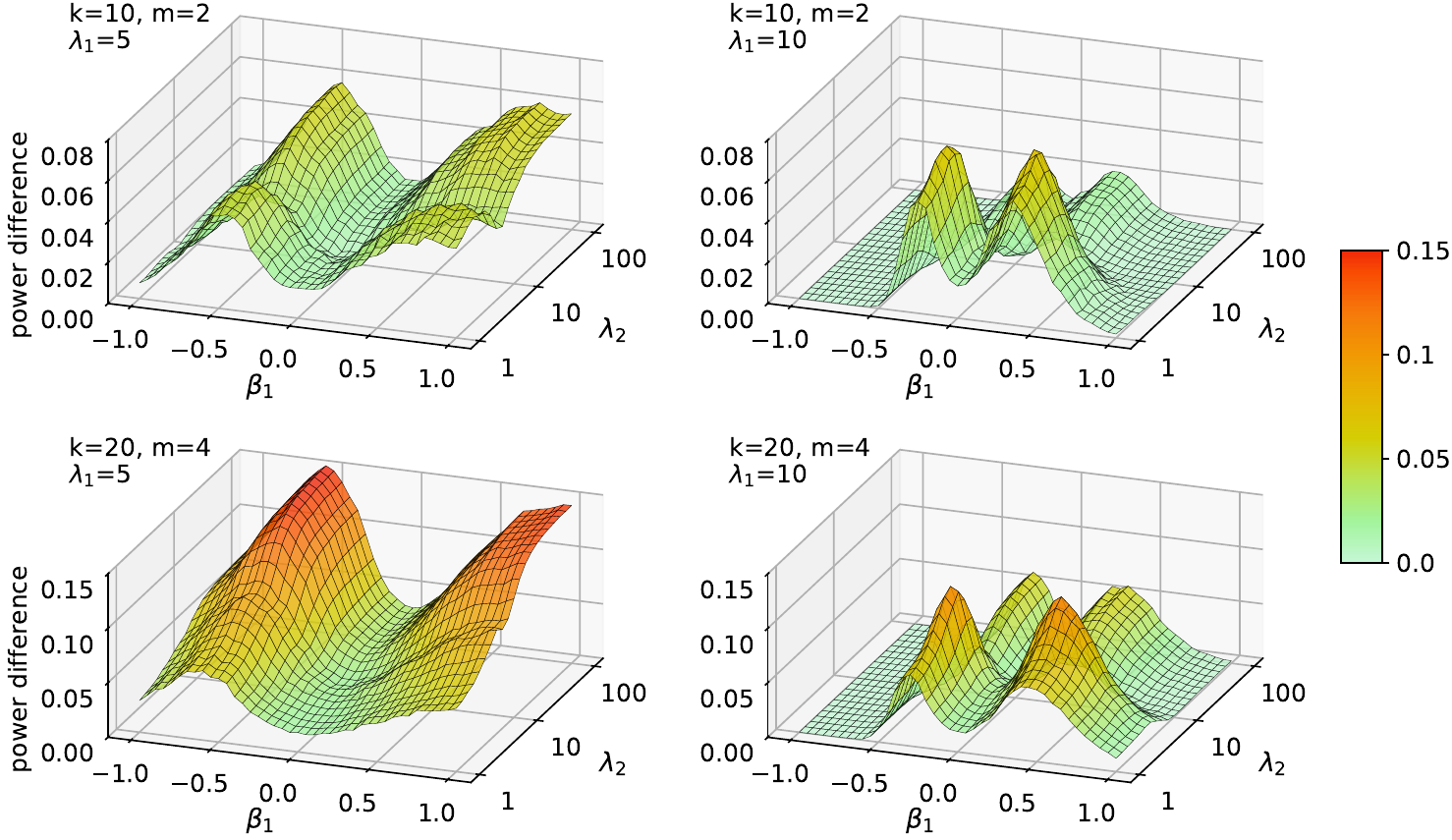}
\caption{
\label{fig:3}
Power difference of the conditional likelihood-ratio test at the significance level ${\alpha=0.05}$ using \citeauthor{kleibergen2007generalizing}'s \citeyearpar{kleibergen2007generalizing} critical values and those of \cref{theorem:1}.
We vary $\beta = \beta_1 \cdot e_1$ for $\beta_1 = -1, \ldots, 1$ linearly spaced with 41 values and vary $\lambda_2 = 1, \ldots, 100$, the identification of the variables other than $X_1$, logarithmically spaced with 21 values.
The concentration matrix is $n\Omega_V^{-1} \Pi^T Q \Pi = \mathrm{diag}(\lambda_1, \lambda_2)$ ($k=10$, $m=2$, top) and $n\Omega_V^{-1} \Pi^T Q \Pi = \mathrm{diag}(\lambda_1, \lambda_2, \lambda_2, \lambda_2)$ ($k=20, m=4$, bottom) for $\lambda_1 = 5$ (left) and $\lambda_1 = 10$ (right).
The power difference is computed over $20'000$ simulations for each grid point.
}
\end{figure}

\FloatBarrier

\bibliography{bib}

\begin{thebibliography}{}

\bibitem[\protect\citeauthoryear{Anderson}{Anderson}{1951}]{anderson1951estimating}
Anderson, T.~W. (1951).
\newblock Estimating linear restrictions on regression coefficients for
  multivariate normal distributions.
\newblock {\em The Annals of Mathematical Statistics\/}~{\em 22\/}(3),
  327--351.

\bibitem[\protect\citeauthoryear{Hillier}{Hillier}{2009}]{hillier2009conditional}
Hillier, G. (2009).
\newblock On the conditional likelihood ratio test for several parameters in
  {IV} regression.
\newblock {\em Econometric Theory\/}~{\em 25\/}(2), 305--335.

\bibitem[\protect\citeauthoryear{Kleibergen}{Kleibergen}{2002}]{kleibergen2002pivotal}
Kleibergen, F. (2002).
\newblock Pivotal statistics for testing structural parameters in instrumental
  variables regression.
\newblock {\em Econometrica\/}~{\em 70\/}(5), 1781--1803.

\bibitem[\protect\citeauthoryear{Kleibergen}{Kleibergen}{2007}]{kleibergen2007generalizing}
Kleibergen, F. (2007).
\newblock Generalizing weak instrument robust {IV} statistics towards multiple
  parameters, unrestricted covariance matrices and identification statistics.
\newblock {\em Journal of Econometrics\/}~{\em 139\/}(1), 181--216.

\bibitem[\protect\citeauthoryear{Kleibergen}{Kleibergen}{2021}]{kleibergen2021efficient}
Kleibergen, F. (2021).
\newblock Efficient size correct subset inference in homoskedastic linear
  instrumental variables regression.
\newblock {\em Journal of Econometrics\/}~{\em 221\/}(1), 78--96.

\bibitem[\protect\citeauthoryear{Londschien}{Londschien}{2025}]{londschien2025statistician}
Londschien, M. (2025).
\newblock A statistician's guide to weak-instrument-robust inference in
  instrumental variables regression with illustrations in {Python}.
\newblock {\em arXiv preprint arXiv:2508.12474\/}.

\bibitem[\protect\citeauthoryear{Londschien and B{\"u}hlmann}{Londschien and
  B{\"u}hlmann}{2024}]{londschien2024weak}
Londschien, M. and P.~B{\"u}hlmann (2024).
\newblock Weak-instrument-robust subvector inference in instrumental variables
  regression: A subvector lagrange multiplier test and properties of subvector
  anderson-rubin confidence sets.
\newblock {\em arXiv preprint arXiv:2407.15256\/}.

\bibitem[\protect\citeauthoryear{Moreira}{Moreira}{2003}]{moreira2003conditional}
Moreira, M.~J. (2003).
\newblock A conditional likelihood ratio test for structural models.
\newblock {\em Econometrica\/}~{\em 71\/}(4), 1027--1048.

\bibitem[\protect\citeauthoryear{Staiger and Stock}{Staiger and
  Stock}{1997}]{staiger1997instrumental}
Staiger, D.~O. and J.~H. Stock (1997).
\newblock Instrumental variables regression with weak instruments.
\newblock {\em Econometrica\/}~{\em 65\/}(3), 557--586.

\end{thebibliography}

\appendix
\section{Proofs}

\begin{lemma}
    \label{lem:1}
    The arrowhead matrix
    $$
    A =
    \begin{pmatrix}
        d_0 & a_1 & a_2 & \cdots & a_l \\
        a_1 & d_1 & 0 & \cdots & 0 \\
        a_2 & 0 & d_2 & \cdots & 0 \\
        \vdots & \vdots & \vdots & \ddots & \vdots \\
        a_l & 0 & 0 & \cdots & d_l
    \end{pmatrix}
    $$
    has determinant
    $$
    \det(A) = \prod_{i = 0}^l d_i - \sum_{i = 1}^l \prod_{j \geq 1, j \neq i} d_j \cdot a_i^2
    $$
\end{lemma}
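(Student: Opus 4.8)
The plan is to prove the formula by induction on $l$. The base case $l=1$ is the computation $\det\begin{pmatrix} d_0 & a_1 \\ a_1 & d_1\end{pmatrix} = d_0 d_1 - a_1^2$, which matches the claimed expression (the product $\prod_{j\geqslant 1,\, j\neq 1} d_j$ being empty). An alternative one-shot proof would be to write $A = \begin{pmatrix} d_0 & a^T \\ a & D \end{pmatrix}$ with $D = \diag(d_1,\ldots,d_l)$ and $a = (a_1,\ldots,a_l)^T$, use the Schur-complement identity $\det(A) = \det(D)\,(d_0 - a^T D^{-1} a)$ when $d_1,\ldots,d_l \neq 0$, and then extend to all $d_i$ by noting that both sides are polynomials in the entries and agree on a dense set; I would present the induction since it needs no such genericity argument.

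For the inductive step, expand $\det(A)$ along the last row, whose only nonzero entries are $a_l$ in column $0$ and $d_l$ in column $l$. The cofactor of $d_l$ is the determinant of the $l\times l$ arrowhead matrix $A'$ obtained by deleting the last row and column, to which the induction hypothesis applies. The cofactor of $a_l$ is, up to sign, the determinant of the matrix $B$ obtained by deleting the last row and the first column of $A$; here $B$ has first row $(a_1,\ldots,a_l)$ and its remaining rows form $\diag(d_1,\ldots,d_{l-1})$ bordered by a zero last column, so expanding $\det(B)$ along that last column leaves only the entry $a_l$ and gives $\det(B) = \pm\, a_l \prod_{i=1}^{l-1} d_i$. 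Collecting the two contributions with their cofactor signs yields the recursion $\det(A) = d_l \det(A') - a_l^2 \prod_{i=1}^{l-1} d_i$. Substituting the induction hypothesis for $\det(A')$, distributing the factor $d_l$, and observing that $\prod_{i=1}^{l-1} d_i = \prod_{j\geqslant 1,\, j\neq l} d_j$ gives exactly $\prod_{i=0}^{l} d_i - \sum_{i=1}^{l} a_i^2 \prod_{j\geqslant 1,\, j\neq i} d_j$, completing the induction.

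The only step that needs care is tracking the cofactor signs: the last-row expansion contributes factors $(-1)^{l+0}$ and $(-1)^{l+l}$, and the subsequent expansion of $\det(B)$ along its last column contributes a further $(-1)^{1+l}$; these must combine to produce the single minus sign in front of $a_l^2 \prod_{i=1}^{l-1} d_i$ (indeed $(-1)^{l}\cdot(-1)^{1+l} = -1$). This is pure sign bookkeeping and is the only obstacle; everything else is a routine distribution of products.
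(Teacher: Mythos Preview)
Your induction proof is correct; the sign bookkeeping works out exactly as you say, and the recursion $\det(A)=d_l\det(A')-a_l^2\prod_{i=1}^{l-1}d_i$ combined with the induction hypothesis yields the claimed formula for all values of the entries.

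The paper takes the route you flagged as the alternative: it performs the row reduction (equivalently, the Schur complement with respect to the diagonal block $D=\diag(d_1,\ldots,d_l)$) to obtain $\det(A)=\bigl(d_0-\sum_{i=1}^l a_i^2/d_i\bigr)\prod_{i=1}^l d_i$ under the assumption $d_1,\ldots,d_l\neq 0$, and then extends to the degenerate cases by continuity of both sides as polynomials in the entries. So the paper's argument is precisely the one you set aside because of the genericity step. Your induction trades that continuity argument for a cofactor expansion and some sign tracking; it is slightly longer but works uniformly for all $d_i$ without any limiting argument. Both approaches are standard and equally elementary; the Schur-complement version has the mild advantage of giving the factored form $\det(D)(d_0-a^TD^{-1}a)$ directly, which is sometimes the more useful expression downstream.
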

\begin{proof}
Assume that $d_1, \ldots, d_l \neq 0$.
We remove the non-zero entries $a_i$ in the first row by Gauss elimination: For $i = 1, \ldots l$, we substract the $i+1$-th row times $\frac{a_i}{d_i}$ from the first row.
This preserves the determinant.
Thus,
\begin{align*}
\det(A) &= \det \begin{pmatrix}
    d_0 - \sum_{i = 1}^l \frac{a_i^2}{d_i} & 0 & 0 & \cdots & 0 \\
    a_1 & d_1 & 0 & \cdots & 0 \\
    a_2 & 0 & d_2 & \cdots & 0 \\
    \vdots & \vdots & \vdots & \ddots & \vdots \\
    a_l & 0 & 0 & \cdots & d_l
\end{pmatrix} \\
\numberthis
\label{eq:det_arrowhead}
&= \left( d_0 - \sum_{i = 1}^l \frac{a_i^2}{d_i} \right) \cdot \prod_{i = 1}^l d_i = \prod_{i = 1}^l d_i - \sum_{i = 1}^l \prod_{j \geq 1, j \neq i} d_j \cdot a_i^2.
\end{align*}
This polynomial is continuous in $d_1, \ldots, d_l$, as is the determinant of $A$ in its entries.
Equation \eqref{eq:det_arrowhead} thus holds for all $d_1, \ldots, d_l$ by continuity.
\end{proof}

\printProofs
\end{document}